\DeclarePairedDelimiter{\ceil}{\lceil}{\rceil}
\definecolor{darkgray}{RGB}{64,64,64}
\definecolor{litegray}{RGB}{192,192,192}
\tikzstyle{block}=[draw, rectangle, minimum height=1cm, text width=2cm, text centered, draw=darkgray, font=\small]
\tikzstyle{block_medium}=[draw, rectangle, minimum height=1.5cm, text width=2cm, text centered, draw=darkgray, font=\small]
\tikzstyle{block_large}=[draw, rectangle, minimum height=2.5cm, text width=2cm, text centered, draw=darkgray, font=\small]
\tikzstyle{line} = [draw, -latex]
\author{
Zilin Jiang\thanks{Department of Mathematics, Massachusetts Institute of Technology, Cambridge, MA 02139, USA. Email: {\tt zilinj@mit.edu}. The work was done when Z. Jiang was a postdoctoral fellow at Technion -- Israel Institute of Technology, and was supported in part by the Israel Science Foundation (ISF) grant nos 1162/15, 936/16.}
\and Nikita Polyanskii\thanks{CDISE, Skolkovo Institute of Science and Technology, and Department of Mathematics, Technion -- Israel Institute of Technology. Email: {\tt nikita.polyansky@gmail.com}. Supported in part by ISF grant nos. 1162/15, 326/17 and the Russian Foundation for Basic Research (RFBR) through grant nos. 16-01-00440~A, 18-07-01427~A, 18-31-00310~MOL\_A.}
\and Ilya Vorobyev\thanks{CDISE, Skolkovo Institute of Science and Technology, and Moscow Institute of Physics and Technology. Email: {\tt vorobyev.i.v@yandex.ru}. Supported in part by RFBR through grant nos. 16-01-00440~A, 18-07-01427~A, \mbox{18-31-00361~MOL\_A}.}}
\title{On capacities of the two-user union channel with complete feedback}
\date{}
\newtheorem{theorem}{Theorem}
\newtheorem{lemma}[theorem]{Lemma}
\newtheorem{conjecture}{Conjecture}
\theoremstyle{definition}
\newtheorem{definition}{Definition}
\theoremstyle{remark}
\newtheorem{remark}{Remark}
\newcommand{\cC}{\mathcal{C}}
\newcommand{\cE}{\mathcal{E}}
\newcommand{\cO}{\mathcal{O}}
\newcommand{\cL}{\mathcal{L}}
\newcommand{\N}{\mathbb{N}}
\newcommand{\R}{\mathbb{R}}
\newcommand{\X}{\mathcal{X}}
\newcommand{\Y}{\mathcal{Y}}
\newcommand{\va}{\bm{a}}
\newcommand{\vb}{\bm{b}}
\newcommand{\vj}{\bm{j}}
\newcommand{\vx}{\bm{x}}
\renewcommand{\epsilon}{\varepsilon}
\newcommand{\al}{\alpha}
\newcommand{\ta}{\theta}
\newcommand{\la}{\lambda}
\newcommand{\pr}[1]{\operatorname{Pr}\left({#1}\right)}
\newcommand{\cpr}[2]{\operatorname{Pr}\left({#1}\mid{#2}\right)}
\newcommand{\abs}[1]{\left\lvert {#1}\right\rvert}
\newcommand{\dset}[2]{\left\{#1 : #2\right\}}
\newcommand{\sset}[1]{\left\{#1\right\}}
\newcommand{\from}{\colon}
\begin{document}

\maketitle

\begin{abstract}
	The exact values of the optimal symmetric rate point in the Cover--Leung capacity region of the two-user union channel with complete feedback were determined by Willems when the size of the input alphabet is 2, and by Vinck, Hoeks and Post when the size is at least 6. We complete this line of research when the size of the input alphabet is 3, 4 or 5. The proof hinges on the technical lemma that concerns the maximal joint entropy of two independent random variables in terms of their probability of equality. For the zero-error capacity region, using superposition coding, we provide a practical near-optimal communication scheme which improves all the previous explicit constructions.
\end{abstract}

\section{Introduction}

The \emph{two-user union channel}, first introduced in \cite{chang1981t} and rediscovered in \cite{vinck85}, is a discrete memoryless multiple-access channel\footnote{The terminology from information theory used throughout the article is standard, and can be found in \cite{cover2012elements}.}: the channel takes symbols $x_1, x_2$ from the input alphabet $\X := [q] = \sset{1,2,\dots, q}$ given by two senders, and outputs the union $y = \sset{x_1, x_2}$ from the output alphabet $\Y := \dset{y\subseteq [q]}{\abs{y} \in \sset{1, 2}}$. For the special case $q=2$, the union channel coincides with the \emph{two-user binary adder channel}.

Since a received $y\in \Y$ cannot be unambiguously decoded, the central problem in two-user communication theory is to coordinate the two senders to send simultaneously as much information as possible to a single receiver through $n$ uses of the union channel.

Let the message sets specified for the senders be of size $ M_1$ and $M_2$, and let $w_1 \in [M_1], w_2 \in [M_2]$ be two messages chosen by the two senders beforehand. During the $k$th use of the channel, two functions $e_{1k}$ and $e_{2k}$ respectively encode $w_1$ and $w_2$ to two codewords $x_{1k} \in [q]$ and $x_{2k} \in [q]$. The union channel then takes $x_{1k}, x_{2k}$ and outputs $y_k := \sset{x_{1k}, x_{2k}}\in \Y$. The sequence of outputs $(y_k)_{k=1}^n$ is decoded by the receiver to the estimate $(\hat{w}_1, \hat{w}_2)$ of $(w_1,w_2)$.

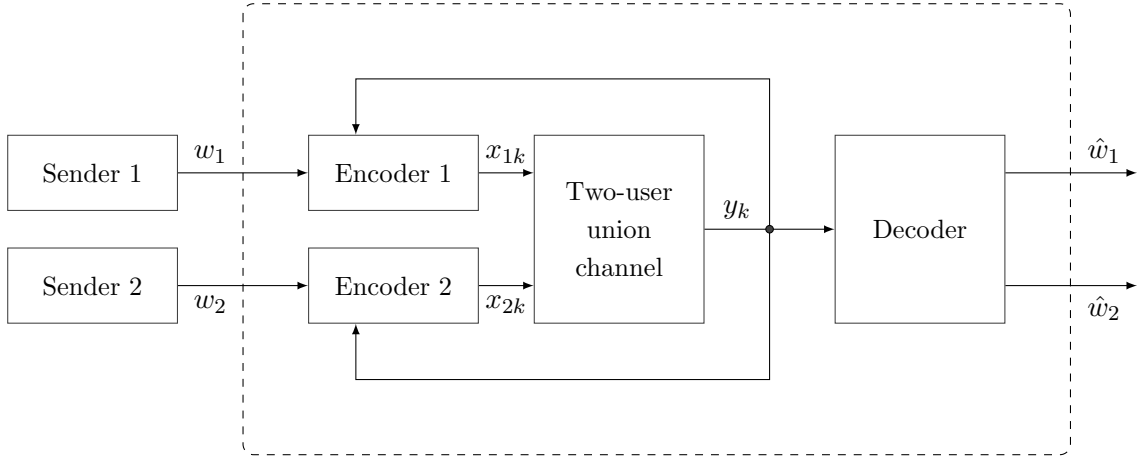
\begin{figure}[t]
  \centering
  \begin{tikzpicture}
    \node[block_large] (c) at (0,0) {Two-user union channel};
    \node[block_large] (d) at (4,0) {Decoder};
    \node[block] (e1) at (-3,0.75) {Encoder 1};
    \node[block] (e2) at (-3,-0.75) {Encoder 2};
    \node[block] (u1) at (-7,0.75) {Sender 1};
    \node[block] (u2) at (-7,-0.75) {Sender 2};
    \path[line] (u1) -- node[near start, above] {$w_1$} (e1);
    \path[line] (u2) -- node[near start, below] {$w_2$} (e2);
    \path[line] (e1.east) -- node[above] {$x_{1k}$} ([yshift=0.75cm]c.west);
    \path[line] (e2.east) -- node[below] {$x_{2k}$} ([yshift=-0.75cm]c.west);
    \path[line] (c) -- node[near start, above] {$y_k$} (d);
    \path[line] (2,0) -- (2,2) -- (-3.5,2) -- ([xshift=-0.5cm]e1.north);
    \path[line] (2,0) -- (2,-2) -- (-3.5,-2) -- ([xshift=-0.5cm]e2.south);
    \path[draw, dashed, rounded corners] (-5,3) -- (-5,-3) -- (6,-3) -- (6,3) -- cycle;
    \path[line] ([yshift=0.75cm]d.east) -- node[near end, above] {$\hat{w}_1$} +(1.75,0);
    \path[line] ([yshift=-0.75cm]d.east) -- node[near end, below] {$\hat{w}_2$} +(1.75,0);
    \node[draw, circle, minimum size=1mm, inner sep=0pt, outer sep=0pt, fill=darkgray] at (2,0) {};
  \end{tikzpicture}
  \caption{Two-user union channel with complete feedback.}
\end{figure}

In such a communication scheme for the transmission of information from two senders to one receiver, the encoders at any moment might know some information about the signals received by the decoder prior to this moment. When the encoders know nothing but the message from their corresponding senders, we say that the channel is the \emph{two-user union channel without feedback}. However, when the encoders also know all the previous outputs of the channel, namely every $e_{1k}$ and $e_{2k}$ depend not only on $w_1$ and $w_2$ respectively but also on $(y_i)_{i=1}^{k-1}$, we say that the channel is the \emph{two-user union channel with complete feedback}. See \cite[Section 4]{MR3699753} for a broader view on coding for the multiple-access channel.

In this work, we mainly focus on capacities of the two-user union channel with complete feedback. An $(M_1, M_2, n, \epsilon)$ code for the two-user union channel with complete feedback consists of a collection encoding functions and a decoding function such that the \emph{probability of error}, defined by $\pr{(\hat{w}_1, \hat{w}_2) \neq (w_1, w_2)}$ when $(w_1,w_2)$ is drawn uniformly from $[M_1]\times[M_2]$, is at most $\epsilon$. In particular, an $(M_1, M_2, n, 0)$ code could recover the messages without errors. The channel capacity region $\cE_f$ for the two-user union channel with complete feedback captures the rates at which the information can be transmitted over the channel for both users with arbitrarily small probability of error, whereas the zero-error capacity region $\cO_f$ represents the rates without error:
\begin{gather*}
  \cE_f := \text{closure of }\dset{\left(R_1,R_2\right)}{\exists\text{ a sequence of }(\ceil{q^{nR_1}}, \ceil{q^{nR_2}}, n, \epsilon_n)\text{ codes s.t. }\epsilon_n\to 0}, \\
  \cO_f := \text{closure of }\dset{(R_1,R_2)}{\exists\text{ a sequence of }(\ceil{q^{nR_1}}, \ceil{q^{nR_2}}, n, \epsilon_n)\text{ codes s.t. }\epsilon_n = 0\text{ eventually}}.
\end{gather*}
In the absence of feedback, the channel capacity region $\cE$ and the zero-error capacity region $\cO$ are similarly defined for the two-user union channel.

For each of the above capacity regions, say $\cC$, research has been devoted to the \emph{average capacity} \[
  R(\cC) := \sup\dset{\tfrac{1}{2}(R_1+R_2)}{(R_1,R_2)\in \cC},
\] which can be understood as the maximal rate per user at which the information can be transmitted. Because $\cC$ is convex and symmetric with respect to the line $R_1 = R_2$, the average capacity $R(\cC)$ can also be defined as $\sup\dset{R}{(R,R)\in\cC}$. The point $(R(\cC), R(\cC))$ is known as the equal-rate point or the symmetric rate point in the existing literature.

The channel capacity region for a discrete memoryless multiple-access channel without feedback has been fully characterized by Alswede~\cite{ahlswede1973multi} and Liao~\cite{liao1972multiple}. For the two-user union channel without feedback, the average channel capacity $R(\cE) = 1 - \frac{q-1}{2q\log_2q}$ has been determined by Chang and Wolf~\cite{chang1981t}. Much less is known for the average zero-error capacity $R(\cO)$ of the two-user union channel without feedback. For $q = 2$, there is no better upper bound other than the trivial $R(\cO)\le R(\cE) = 0.75$, while the current record lower bound is $R(\cO) \ge \frac{1}{12}\log_2 240 = 0.65891$ due to Mattas and \"Osterg{\aa}rd~\cite[Section III]{mattas2005new} obtained by computer searches. For $q\ge 3$, several constructions provided by Chang and Wolf~\cite[Section III]{chang1981t} imply that $R(\cO) \ge \frac{1}{4}\left(1+\log_q(q^2-q+1)\right)$ for all $q \ge 2$, $R(\cO) \ge \log_q\left(\frac{1}{2}(q+1)\right)$ for odd $q$ and $R(\cO) \ge \log_q\left(\frac{1}{2}\sqrt{q(q+2)}\right)$ for even $q$. The variation where the senders are required to use the same encoding functions was studied in various context. We refer the readers to \cite{lindstrom1969determination} for the best code construction when the size of the input alphabet is $2$, to \cite{cohen2001binary} for the connection with the binary $B_2$-sequences, and to \cite{gao2014new} for large input alphabet. The generalization, in which more than $2$ users have access to the channel, was recently investigated in \cite{MR3418936} and \cite{MR3724426}.

Gaarder and Wolf~\cite{gaarder75} demonstrated that feedback may increase the channel capacity region. They used the two-user binary adder channel as an example and developed a simple two-stage coding strategy. Using the concept of superposition coding Cover and Leung~\cite{cover1981achievable} characterized a subset of the channel capacity region $\cE_f$ for the discrete memoryless multiple-access channels with complete feedback. This subset was later shown to be exactly $\cE_f$ by Willems~\cite{willems1982feedback} for the class of the channels where one of the inputs is determined by the other input and the output. Their results are paraphrased as the following theorem in the special case that the channel is the two-user union channel.

\begin{theorem}[Theorem 1 of Cover and Leung~\cite{cover1981achievable} and Theorem of Willems~\cite{willems1982feedback}] \label{coverAchievable}
  The channel capacity region $\cE_f$ of the two-user union channel using input alphabet $[q]$ with complete feedback is the convex hull of all $(R_1,R_2)$ satisfying
  \begin{align*}
    0\le R_1 \le H(X_1\mid U), \quad
    0\le R_2 \le H(X_2\mid U), \quad
    R_1 + R_2 \le H(\sset{X_1, X_2})
  \end{align*}
  where $U$ is a discrete random variable\footnote{Salehi showed~\cite[Section III(d)]{salehi1978cardinality} that the channel capacity region is retained when the cardinality of $U$, denoted by $\abs{U}$, is restricted to ${q+1 \choose 2}$. This bound on $\abs{U}$ was also mentioned in \cite{cover1981achievable}. However, \cite{willems1982feedback} only referred to a slightly weaker bound $\abs{U} \le {q+1 \choose 2}+2$.}, $X_1, X_2$ are two $[q]$-valued random variables that are conditionally independent given $U$, and the entropy function $H$ uses the base-$q$ logarithm.
\end{theorem}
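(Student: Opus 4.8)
The plan is to prove the two inclusions separately: achievability (every listed rate pair lies in $\cE_f$) and the converse (every point of $\cE_f$ satisfies the listed constraints for some admissible $U, X_1, X_2$). The guiding observation is that the union channel is \emph{invertible}, meaning $H(X_1 \mid \sset{X_1,X_2}, X_2) = 0$ and symmetrically $H(X_2 \mid \sset{X_1,X_2}, X_1) = 0$: given the output and either input, the other input is determined. I would first record the consequences for a fixed admissible triple $(U, X_1, X_2)$: writing $Y = \sset{X_1, X_2}$, the map $x_1 \mapsto \sset{x_1, x_2}$ is injective for each fixed $x_2$, so $H(Y \mid X_2, U) = H(X_1 \mid X_2, U) = H(X_1 \mid U)$ by conditional independence, and likewise $H(Y \mid X_1, U) = H(X_2 \mid U)$, while $H(Y \mid X_1, X_2) = 0$. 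These identities are what let the general feedback-MAC expressions collapse to the clean forms $H(X_1 \mid U)$, $H(X_2 \mid U)$ and $H(\sset{X_1,X_2})$.

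For achievability I would use block-Markov superposition coding across $B$ blocks. In each block the two senders transmit fresh messages through codewords drawn according to $p(x_1 \mid u)$ and $p(x_2 \mid u)$ superimposed on a cloud center indexed by $U$; the role of the complete feedback is that, after each block, both encoders observe the entire output sequence and can therefore agree on a common resolution index for the previous block, which becomes the cloud center $U$ for the next block. The receiver uses backward (or sliding-window) decoding across the blocks. A standard joint-typicality analysis yields reliability provided $R_1 < I(X_1; Y \mid X_2, U)$, $R_2 < I(X_2; Y \mid X_1, U)$ and $R_1 + R_2 < I(X_1, X_2; Y)$; letting $B \to \infty$ drives the rate loss of the terminal block to zero. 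Substituting the invertibility identities above converts these three bounds into exactly $R_1 \le H(X_1 \mid U)$, $R_2 \le H(X_2 \mid U)$ and $R_1 + R_2 \le H(\sset{X_1, X_2})$, and time sharing over the choice of $(U, X_1, X_2)$ supplies the convex hull.

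For the converse I would run Fano's inequality together with single-letterization, the only delicate point being the correct choice of auxiliary variable in the presence of feedback. Set $U_k := Y^{k-1} = (Y_1, \dots, Y_{k-1})$. Because $X_{1k}$ is a function of $(W_1, Y^{k-1})$ and $X_{2k}$ a function of $(W_2, Y^{k-1})$, while $W_1$ and $W_2$ are independent, conditioning on $U_k$ makes $X_{1k}$ and $X_{2k}$ independent — precisely the conditional-independence hypothesis of the theorem. For the sum rate, since the channel is deterministic $H(Y^n \mid W_1, W_2) = 0$, so Fano gives $n(R_1 + R_2) \le I(W_1, W_2; Y^n) + n\epsilon_n = H(Y^n) + n\epsilon_n \le \sum_k H(Y_k) + n\epsilon_n$; identifying $(X_1, X_2)$ with $(X_{1T}, X_{2T})$ for a uniform time index $T$ yields $R_1 + R_2 \le H(\sset{X_1, X_2}) + \epsilon_n$. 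For the individual rate, I would bound $n R_1 = H(W_1 \mid W_2) \le I(W_1; Y^n \mid W_2) + n\epsilon_n = \sum_k I(W_1; Y_k \mid W_2, Y^{k-1}) + n\epsilon_n \le \sum_k H(Y_k \mid X_{2k}, Y^{k-1}) + n\epsilon_n$, using that $X_{2k}$ is a function of $(W_2, Y^{k-1})$. Invertibility now gives $H(Y_k \mid X_{2k}, Y^{k-1}) = H(X_{1k} \mid X_{2k}, U_k) = H(X_{1k} \mid U_k)$, so with $U = (T, Y^{T-1})$ we obtain $R_1 \le H(X_1 \mid U) + \epsilon_n$; the bound on $R_2$ is symmetric. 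Letting $\epsilon_n \to 0$ completes the converse.

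The main obstacle is the converse, and within it the individual-rate bounds rather than the sum-rate bound. Feedback converses are subtle because the inputs are not independent of the past outputs, so one cannot naively peel off a single channel use; the whole argument hinges on choosing $U_k = Y^{k-1}$, which simultaneously restores the conditional independence of $X_{1k}$ and $X_{2k}$ demanded by the statement and, through invertibility, turns the conditional output entropies into the conditional input entropies $H(X_i \mid U)$. Verifying that these choices are mutually consistent is the step I would be most careful about — in particular, the sum-rate bound emerges unconditionally as $H(\sset{X_1,X_2})$ while the individual bounds are conditional on $U$, yet all three must hold simultaneously for the single admissible triple $(U, X_1, X_2) = \bigl((T, Y^{T-1}), X_{1T}, X_{2T}\bigr)$.
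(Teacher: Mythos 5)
The paper does not prove this statement at all: Theorem~\ref{coverAchievable} is imported as a black box, with the achievability half credited to Cover and Leung and the matching converse to Willems, so your attempt can only be compared against those original arguments. Your achievability half is indeed Cover--Leung's block-Markov superposition scheme with feedback used to build a common cloud center, and there your invertibility identities $H(Y\mid X_2,U)=H(X_1\mid U)$, $H(Y\mid X_1,X_2)=0$ are applied legitimately, because in that direction you \emph{choose} the joint distribution $p(u)p(x_1\mid u)p(x_2\mid u)$, so conditional independence holds by fiat. As a sketch this is fine.

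The converse, however, has a genuine gap, and it sits exactly at the step you flagged as the delicate one. The claim that conditioning on $U_k = Y^{k-1}$ makes $X_{1k}$ and $X_{2k}$ conditionally independent is false: independence of $W_1$ and $W_2$ does not survive conditioning on a common descendant, and the feedback $Y^{k-1}$ is a function of both messages. Concretely, for $q=2$ let both encoders send their independent uniform message bits at time $1$, so $x_{11}=w_1$ and $x_{21}=w_2$. Conditioned on $Y_1=\sset{1,2}$, the pair $(W_1,W_2)$ is uniform on $\sset{(1,2),(2,1)}$, i.e.\ $W_1$ determines $W_2$; if the encoders re-send their bits at time $2$, then $X_{12}$ and $X_{22}$ are perfectly dependent given $Y^{1}$. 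Your chain of inequalities survives only in weakened form --- $H(X_{1k}\mid X_{2k},Y^{k-1})\le H(X_{1k}\mid Y^{k-1})$ by conditioning, so the numerical bound $R_1\le H(X_{1T}\mid T,Y^{T-1})+\epsilon_n$ is still true --- but the triple $\bigl((T,Y^{T-1}),X_{1T},X_{2T}\bigr)$ then violates the conditional-independence hypothesis that defines the region, so it does not witness membership of $(R_1,R_2)$ in the stated convex hull, and the converse is not finished. A sanity check that this cannot be patched cheaply: your argument for conditional independence nowhere uses the structure of the union channel, so if it were correct the same converse would show that the Cover--Leung region equals the feedback capacity region of \emph{every} multiple-access channel, which is known to be false (Ozarow's Gaussian counterexample). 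Willems' actual converse repairs precisely this point by enlarging the auxiliary variable to include the input histories, $U_k := (X_1^{k-1},X_2^{k-1},Y^{k-1})$: the factorization $\pr{w_1,w_2,x_1^{k-1},x_2^{k-1},y^{k-1}} = A(w_1,x_1^{k-1},y^{k-1})\,B(w_2,x_2^{k-1},y^{k-1})\,C(x_1^{k-1},x_2^{k-1},y^{k-1})$ shows $W_1$ and $W_2$ \emph{are} conditionally independent given this $U_k$, hence so are $X_{1k}$ and $X_{2k}$; and invertibility is what makes this enlargement usable, since it renders $X_1^{k-1}$ computable from $(X_2^{k-1},Y^{k-1})$ and vice versa, so that $U_k$ is simultaneously a function of $(W_1,Y^{k-1})$ and of $(W_2,Y^{k-1})$, which is exactly what lets the Fano step $H(Y_k\mid W_2,Y^{k-1})\le H(Y_k\mid U_k,X_{2k})=H(X_{1k}\mid U_k)$ go through. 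In short: invertibility must be used twice in the converse --- once to collapse output entropies to input entropies, as you do, and once more to justify the choice of auxiliary variable, which you omit.
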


\begin{remark}
  The entropy $H(\sset{X_1, X_2})$ in Theorem~\ref{coverAchievable} is the entropy of the random variable $Y := \sset{X_1, X_2}$, not to be confused with the joint entropy $H(X_1, X_2)$.
\end{remark}

For $q=2$, Willems~\cite{willems1984multiple} later showed that $R(\cE_f) = 0.79113$; for $q\ge6$, Vinck, Hoeks and Post~\cite{vinck85} asserted that $R(\cE_f) = \frac{1}{2}\log_q{q+1 \choose 2}$. In Section~\ref{Shannon Capacity Feedback} we complete this line of research on $R(\cE_f)$ for all $q\ge2$. The proof hinges on the following lemma about the maximum of the joint entropy of two independent discrete random variables in terms of their probability of equality.

\begin{lemma}\label{joint entropy}
	Given $q \ge 2$, for every $\ta\in[0,1]$ let $F(\ta)$ be the maximum of the joint entropy $H(X_1, X_2)$ among all pairs of independent $[q]$-valued random variables $X_1, X_2$ such that $\pr{X_1 = X_2} = \ta$. The function $F\from [0,1]\to\R$ is continuous and it is increasing on $[0,1/q]$ and decreasing on $[1/q,1]$. Moreover,
	\[
    F(\ta) = 2\left(-\al \log\al - (1-\al)\log(1-\al) + (1-\al)\log(q-1)\right),\quad \text{for }\ta\in[1/q,1]
	\]
	where the bijection $\al\from [1/q,1]\to[1/q,1]$ is defined by
  \begin{equation}\label{def_alpha}
	  \al = \al(\ta) := \frac{1}{q} + \sqrt{\left(1-\frac{1}{q}\right)\left(\ta-\frac{1}{q}\right)}.
	\end{equation}
\end{lemma}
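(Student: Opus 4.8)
The plan is to solve the constrained optimization problem directly. We are maximizing $H(X_1,X_2) = H(X_1) + H(X_2)$ (using independence) over all product distributions on $[q]\times[q]$ subject to the single constraint $\pr{X_1=X_2} = \sum_{i=1}^q p_i r_i = \ta$, where $p = (p_i)$ and $r = (r_i)$ are the distributions of $X_1$ and $X_2$. First I would establish the symmetry: since swapping $X_1$ and $X_2$ preserves both the objective and the constraint, and the feasible set is convex after this symmetrization (the entropy is concave), I expect the optimum to be attained at a symmetric point $p = r$. This reduces the problem to maximizing $2H(X_1)$ subject to $\sum_i p_i^2 = \ta$. So the task becomes: among distributions $p$ on $[q]$ with fixed collision probability $\sum_i p_i^2 = \ta$, maximize the entropy $H(p)$.

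Next I would attack this single-variable-constrained entropy maximization with Lagrange multipliers. Setting up $\mathcal{L} = -\sum_i p_i\log p_i - \lambda(\sum_i p_i^2 - \ta) - \mu(\sum_i p_i - 1)$ and differentiating gives the stationarity condition $-\log p_i - \tfrac{1}{\ln q} - 2\lambda p_i - \mu = 0$ for each $i$ in the support, i.e. $\log p_i + c\, p_i$ is constant across $i$. The function $t \mapsto \log t + c\,t$ is not injective in general, but I expect that for the optimal solution the $p_i$ take at most two distinct values. Specifically, I would argue that the entropy-maximizing distribution with a prescribed collision probability in the regime $\ta \ge 1/q$ is the two-level distribution with one ``heavy'' coordinate of mass $\al$ and $q-1$ ``light'' coordinates each of mass $(1-\al)/(q-1)$; plugging this into $\sum_i p_i^2 = \al^2 + (1-\al)^2/(q-1)$ and setting it equal to $\ta$ recovers exactly the stated relation \eqref{def_alpha} after solving the quadratic for $\al$. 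This two-level form then yields $H(p) = -\al\log\al - (1-\al)\log(1-\al) + (1-\al)\log(q-1)$, and doubling gives the claimed formula for $F(\ta)$ on $[1/q,1]$.

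It remains to handle the structural claims. Continuity of $F$ follows from continuity of the optimum of a continuous objective over a compact feasible set depending continuously on $\ta$ (Berge's maximum theorem, or a direct compactness argument). For monotonicity, the value $\ta = 1/q$ corresponds to the uniform distribution, where $H$ attains its global maximum $1$; as $\ta$ increases past $1/q$ the distribution is forced to concentrate, so $F$ decreases, consistent with $\al$ increasing from $1/q$ to $1$ as $\ta$ ranges over $[1/q,1]$. On $[0,1/q]$ the constraint forces the distribution to be more spread out than uniform, and I would show $F$ is increasing there by a similar analysis, noting that $\ta < 1/q$ forces the two distributions $p$ and $r$ to be ``anti-aligned'' so the symmetric reduction may no longer apply cleanly; this asymmetric regime is where I expect the subtlety to lie.

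The main obstacle will be rigorously justifying that the optimizer is supported on exactly two mass levels rather than settling for a critical point of unknown shape. Because $\log t + ct$ can take each value at most twice, any stationary distribution uses at most two distinct probability values, but ruling out spurious configurations and confirming the specific ``one-heavy, rest-equal'' partition (as opposed to other two-value splits) requires a careful comparison of candidate critical points and boundary cases. I would verify optimality by a direct convexity or majorization argument and by checking that the proposed solution dominates alternatives at the endpoints, thereby pinning down $F$ exactly on $[1/q,1]$ and reducing the remaining verification to the monotonicity and continuity assertions.
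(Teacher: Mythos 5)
Your outline reproduces the paper's skeleton (symmetrize to $p=r$, Lagrange multipliers forcing at most two mass levels, then identify the one-heavy/rest-equal split), but each of the three load-bearing steps is either asserted without proof or justified incorrectly. The worst is the symmetrization: the feasible set $\dset{(p,r)}{\sum_i p_i r_i = \ta}$ is \emph{not} convex, since the constraint is bilinear. Averaging $(p,r)$ with $(r,p)$ produces $\left(\frac{p+r}{2},\frac{p+r}{2}\right)$, whose collision probability $\frac{p+r}{2}\cdot\frac{p+r}{2}$ is by AM--GM at least $p\cdot r$, with equality only if $p=r$; so the symmetrized point is generally \emph{infeasible}, and "concavity of entropy plus symmetry" does not close the argument. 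The paper repairs this in two moves: first sort both vectors in the same order (the rearrangement inequality shows sorting can only increase $p\cdot r$), then average, and finally interpolate the averaged vector back toward the uniform vector $\vj$, invoking the intermediate value theorem to restore the constraint value $\ta$ and concavity to see the objective did not drop. This repair works only when $\ta\ge 1/q$, which is consistent with the fact that for $\ta<1/q$ no symmetric feasible point exists at all (Cauchy--Schwarz gives $\sum_i p_i^2\ge 1/q$); so the symmetric reduction must be confined to $[1/q,1]$ from the start, not expected to hold globally.

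Second, the step you yourself flag as "the main obstacle" --- showing that among all two-level critical points ($r$ heavy coordinates of mass $a$ and $q-r$ light ones of mass $b$, for each $r\in[q-1]$) the optimum is $r=1$ --- is the technical core of the lemma, and "a careful comparison" or "a direct convexity or majorization argument" is not an argument. The paper does it by solving $ra+sb=1$, $ra^2+sb^2=\ta$ explicitly for $(a,b)$, treating $t=r/q$ as a continuous parameter, and proving via implicit differentiation that the objective $v(t)=-ra\ln a - sb\ln b$ is decreasing in $t$; the computation reduces to the elementary inequality $-\tfrac12(\rho+1)\ln\rho+\rho-1<0$ for $\rho=a/b>1$. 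Third, for the claim that $F$ is increasing on $[0,1/q]$ you concede you have no method ("this asymmetric regime is where I expect the subtlety to lie"). The paper sidesteps the asymmetric regime entirely with one uniform unimodality argument that you should adopt: take a maximizer $(\va,\vb)$ for any $\ta$, interpolate \emph{both} vectors toward $\vj$, use the intermediate value theorem to realize any $\ta'$ between $\ta$ and $1/q$, and use concavity of entropy to get $F(\ta')\ge F(\ta)$. This single argument proves both monotonicity claims and requires no structural knowledge of the maximizer. Without these three repairs, the proposal has genuine gaps at every step that matters.
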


The proof of the lemma is provided in Appendix~\ref{proof lemma 3}. We mention that, given $X$ and the probability of equality, some optimization problems such as optimizing $H(Y)$ and minimizing $H(X,Y)$ were solved by Prelov in \cite{MR3273848} and \cite{MR3620431}.

As for the zero-error capacity region, Dueck~\cite[Section 2]{dueck1985zero} established a characterization for a class of discrete memoryless multiple-access channels including the two-user union channel with complete feedback. However, pinning down the precise value of $R(\cO_f)$ is still an open problem. For $q=2$, the best lower bound $R(\cO_f)\ge 0.78974$ was proved by Belokopytov~\cite{belokopytov1989zero} based on Dueck's characterization. Although Dueck's characterization shows that there exist good zero-error codes, it does not provide a way of constructing the best codes explicitly. If we use the scheme suggested by the proof of Dueck's theorem and generate a code at random with the appropriate distribution, the code constructed is likely to be good. However, without some structure in the code, it is computationally very difficult to decode. Hence the theorem does not provide a practical coding scheme.

In the context of group testing or a search problem on graphs, the ``Fibonaccian algorithm'' by Christen~\cite{christen1980fibonaccian} and Aigner~\cite{aigner1986search} gives an $\left(F_{n+1},F_n,n,0\right)$ code explicitly, where $F_n$ is the $n$th Fibonacci number, which implies that $R(\cO_f)\ge \log_2\phi = 0.69424$, where $\phi = 1.61834$ is the golden ratio. Later, the Fibonacci code was rediscovered by Zhang et al.~\cite[Theorem 1]{zhang1987some}, and was refined~\cite[Theorem 2]{zhang1987some} to achieve $R(\cO_f)\ge \log_2\phi' = 0.71662$, where $\phi' = 1.64333$ is the real root of $x^{11} = x^{10} + x^9 + 5$. Using the language of decision trees, Gargano et al.~\cite{gargano1992improved} constructs a $\left(32,32,7,0\right)$ code in an attempt to improve the Fibonacci code. Before our work, the best construction is a $\left(2^{235n+61},2^{235n+61},312n+123,0\right)$ code, for every $n\in\N$, due to Belokopytov and Luzgin~\cite{belokopytov1987block}, achieving $R(\cO_f) \ge 235/312 = 0.75321$. We present in Section~\ref{Zero Error Capacity Feedback} a practical communication scheme which achieves a near-optimal zero-error capacity for all $q$. For $q = 2$, our scheme achieves $R(\cO_f) \ge 0.77291$. For $q\ge 3$, our scheme is new and provides a lower bound that is close to the current upper bound.

We summarize the known results in Table~\ref{known values} for $q \le 6$ with our contribution in bold. We conclude in Section~\ref{Conclusion} with some open problems.

\begin{table}[t]
	\centering
	\begin{tabular}{c|c|c|c|c}
		\hline
		$q$ & $R(\cO)$ & $R(\cE)$ & $R(\cO_f)$ & $R(\cE_f)$ \\ \hline
		2 & [0.65891, 0.75] & 0.75 & [0.78974, 0.79113] & 0.79113 \\
		3 & [0.69281, 0.78969] & 0.78969 & [\textbf{0.81071}, \textbf{0.81510}] & \textbf{0.81510} \\
		4 & [0.71256, 0.8125] & 0.8125 & [\textbf{0.82946}, \textbf{0.83044}] & \textbf{0.83044} \\
		5 & [0.72292, 0.82773] & 0.82773 & [\textbf{0.84123}, \textbf{0.84130}] & \textbf{0.84130} \\
		6 & [0.72914, 0.83881] & 0.83881 & [\textbf{0.84953}, 0.84959] & 0.84959 \\ \hline
	\end{tabular}
  \caption{Summary of bounds on the average capacities of two-user union channels.}\label{known values}
\end{table}

\section{Channel capacity with complete feedback}\label{Shannon Capacity Feedback}

Hereafter $\log x$ stands for the base-$q$ logarithm of $x$. We define the entropy functions
\begin{equation*}
  H(x_1, \dots, x_k) := -\sum x_i \log x_i, \quad\text{for } x_i\ge 0 \text{ and } \sum x_i = 1.
\end{equation*}
and we abbreviate \[
  H\left(\underbrace{\frac{x_1}{r_1}, \dots, \frac{x_1}{r_1}}_{r_1}, \dots, \underbrace{\frac{x_k}{r_k}, \dots, \frac{x_k}{r_k}}_{r_k}\right)
\] by $H(x_1, \dots, x_k; r_1, \dots, r_k)$.

Under this notation, the function $F\from [0,1]\to\R$ defined in Lemma~\ref{joint entropy} can be written as
\begin{equation}\label{def_f}
  F(\ta) = 2H(\al,1-\al;1,q-1),\quad \text{for }\ta\in[1/q,1],
\end{equation}
where $\al = \al(\ta)$, as in \eqref{def_alpha}, is the larger root of the quadratic equation
\begin{equation}\label{def_al}
  (1-\al)^2 = (q-1)(\ta-\al^2) \quad\text{or equivalently}\quad q\al^2 - 2\al + 1 = (q-1)\ta.
\end{equation}
To express the average channel capacity $R(\cE_f)$, we need the concave envelope of $F$ and another function $G\from [0,1]\to\R$.

\begin{definition}
  The concave envelope of a continuous function $F\from I\to \R$ on a closed interval $I$, denoted by $\hat{F}$, is the lowest-valued concave function that overestimates or equals $F$ over $I$. It follows from the strengthened Carath\'{e}odory theorem by Fenchel and Eggleston~\cite[Theorem 18]{eggleston58} that for every $\ta\in I$, $\hat{F}$ is given by \[
    \hat{F}(\ta) = \max\dset{p_1F(\ta_1) + p_2F(\ta_2)}{0\le p_1, p_2\le 1, \ta_1, \ta_2\in I, p_1+p_2=1, p_1\ta_1+p_2\ta_2=\ta}.
  \]
\end{definition}

\begin{lemma}\label{technical lemma}
	The function $G\from[0,1]\to\R$ defined by
  \begin{equation}\label{def_g}
    G(\ta) := H\left(\ta, 1-\ta; q, \binom{q}{2}\right) = H(\ta, 1-\ta) + \ta + (1-\ta)\log\binom{q}{2}
  \end{equation}
  is concave and it attains its maximum $\log\binom{q+1}{2}$ at $2/(q+1)$.
\end{lemma}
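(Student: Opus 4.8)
The plan is to recognize $G$ as the Shannon entropy of an explicit probability distribution and then deduce both assertions from standard facts about entropy. First I would unfold the abbreviation in~\eqref{def_g}: by definition $G(\ta)$ is the entropy of the distribution that places probability $\ta/q$ on each of $q$ points and probability $(1-\ta)/\binom{q}{2}$ on each of $\binom{q}{2}$ further points. The crucial arithmetic observation is the identity $q + \binom{q}{2} = \binom{q+1}{2}$, which says that this distribution is supported on exactly $\binom{q+1}{2}$ points. I would also record that, because $\log$ is the base-$q$ logarithm, $\log q = 1$, which is precisely what reconciles the two expressions for $G$ in~\eqref{def_g}.

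For concavity, I would note that each of the $\binom{q+1}{2}$ probabilities above is an affine function of $\ta$, so the map $\ta \mapsto (\text{distribution})$ is affine into the probability simplex; since entropy is concave on the simplex, $G$ is concave as a concave function precomposed with an affine map. Equivalently, using the closed form in~\eqref{def_g}, $H(\ta,1-\ta)$ is the concave binary entropy while $\ta + (1-\ta)\log\binom{q}{2}$ is affine, so their sum is concave.

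For the maximum, I would invoke the fact that entropy over a support of size $N$ is maximized uniquely by the uniform distribution, with value $\log N$. Here $N = \binom{q+1}{2}$, and the distribution above is uniform exactly when $\ta/q = (1-\ta)/\binom{q}{2}$; solving this linear equation gives $\ta = 2/(q+1)$, at which point $G = \log\binom{q+1}{2}$. As a self-contained alternative that avoids quoting the maximum-entropy principle, I would instead differentiate to obtain $G'(\ta) = \log\frac{1-\ta}{\ta} + 1 - \log\binom{q}{2}$, whose unique root in $(0,1)$ is $\ta = 2/(q+1)$, and then appeal to the concavity already established to certify this critical point as the global maximum.

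There is no genuinely hard step here: the argument is essentially a repackaging of the statement that uniform distributions maximize entropy. The only points requiring care are the combinatorial identity $q+\binom{q}{2}=\binom{q+1}{2}$ and the base-$q$ bookkeeping (in particular $\log q = 1$), both of which enter the derivative computation and the final evaluation of $G$ at its maximizer; getting these conventions right is what makes the clean value $\log\binom{q+1}{2}$ fall out.
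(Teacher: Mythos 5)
Your proposal is correct, and its primary route differs from the paper's in an instructive way. The paper proves concavity exactly as in your ``equivalently'' remark---$H(\ta,1-\ta)$ is concave and the remainder $\ta + (1-\ta)\log\binom{q}{2}$ is affine---and then locates the maximum by calculus: it computes $G'(\ta) = \log\left(\frac{1-\ta}{\ta}\right) + 1 - \log\binom{q}{2}$ and solves $G'(\ta)=0$, which is precisely your ``self-contained alternative.'' Your main argument instead treats $G(\ta)$ as the entropy of a distribution on $q+\binom{q}{2}=\binom{q+1}{2}$ points depending affinely on $\ta$, gets concavity from precomposing concave entropy with an affine map, and gets the maximum from the maximum-entropy principle: uniformity forces $\ta/q = (1-\ta)/\binom{q}{2}$, i.e.\ $\ta = 2/(q+1)$, and the value $\log\binom{q+1}{2}$ falls out with no computation. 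This is arguably more conceptual---it explains \emph{why} the maximum equals the logarithm of the output alphabet size $\abs{\Y}=\binom{q+1}{2}$, a point the paper only makes informally in a later remark calling $\log\binom{q+1}{2}$ a ``naive bound''---and it even certifies the maximum without invoking concavity at all, since entropy on $N$ points is globally bounded by $\log N$ with equality only at the uniform distribution. The paper's calculus route is shorter to typeset but hides this interpretation. Both arguments are complete; since you also supply the derivative computation as a fallback, your proposal subsumes the paper's proof.
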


\begin{proof}
  Since $H(\ta, 1-\ta)$ is concave and $G(\ta)-H(\ta,1-\ta)$ is a linear function of $\ta$, we conclude that $G(\ta)$ is also concave. Taking the derivative of $G$
  \[
    G'(\ta) = \log\left(\frac{1-\ta}{\ta}\right) + 1 - \log\binom{q}{2}
  \]
  and solving $G'(\ta) = 0$ yields the maximum point $\ta = 2/(q+1)$.
\end{proof}

\begin{theorem}\label{capacityFeedback}
  The average channel capacity $R(\cE_f)$ of the two-users union channel using input alphabet $[q]$ with complete feedback is given by
  \begin{equation}\label{eqn_fgt}
    R := \tfrac{1}{2}\max\dset{\min\left(\hat{F}(\ta),G(\ta)\right)}{\ta\in I},
  \end{equation}
  where $F$ and $G$ are defined by \eqref{def_f} and \eqref{def_g}, and $I := [1/q, 2/(q+1)]$. Moreover, the concave envelope of $F$ on $I$ is given by \[
    \hat{F}(\ta) = \begin{cases}
      F(\ta) & \text{when }q = 2; \\
      2-\frac{2(q-1)\log(q-1)}{q-2}\left(\ta-\tfrac{1}{q}\right) & \text{when }q \ge 3,
    \end{cases}
  \] and $R$ can thus be simplified to $\frac{1}{2}G(\ta)$, where $\ta$ is
  \begin{enumerate}[nosep]
    \item the solution of $\hat{F}(\ta) = G(\ta)$ in $I$, when $q = 2, 3, 4$;
    \item simply $\frac{2}{q+1}$, when $q \ge 5$.
  \end{enumerate}
\end{theorem}

\begin{proof}
	We first choose random variables $X_1, X_2, U$ in Theorem~\ref{coverAchievable} to demonstrate that $(R,R)$ is in the channel capacity region $\cE_f$. Let $\ta\in I$ be a maximizer of $\min(\hat{F}(\ta), G(\ta))$ in \eqref{eqn_fgt}, that is, $R$ is the minimum of $\frac{1}{2}\hat{F}(\ta)$ and $\frac{1}{2}G(\ta)$. By the definition of concave envelope, there are $p_1,p_2\in [0,1], \ta_1,\ta_2 \in [1/q,1]$ such that $p_1 + p_2 = 1$, $p_1\ta_1 +p_2\ta_2 = \ta$ and $p_1F(\ta_1) + p_2F(\ta_2) = \hat{F}(\ta) \le R$. Choose the discrete random variable $U$ as follows: for every $u\in\sset{1,2}, v\in [q]$, $U = (u, v)$ with probability $p_u/q$. Given $U = (u,v) \in \sset{1,2}\times[q]$, we choose two conditionally independent random variables $X_1, X_2$:
  \[
    X_1, X_2 = \begin{cases}
      v  &\text{with probability }\al(\ta_u), \\
      v' &\text{with probability }\frac{1-\al(\ta_u)}{q-1},\text{ for }v'\in [q]\setminus\sset{v}.
    \end{cases}
  \]
  Based on these choices of random variables, we obtain that for every $v\in [q]$,
  \begin{equation*}
    \pr{X_1 = X_2 = v} = \sum_{u=1}^2\frac{p_u}{q}\left(\al^2(\ta_u)+(q-1)\left(\frac{1-\al(\ta_u)}{q-1}\right)^2\right)\stackrel{\eqref{def_al}}{=}\sum_{u=1}^2\frac{p_u}{q}\ta_u=\frac{\ta}{q},
  \end{equation*}
  and for every $v_1 \neq v_2$,
  \begin{multline*}
    \pr{X_1 = v_1, X_2 = v_2} = \sum_{u=1}^2\frac{p_u}{q}\left(2\al(\ta_u)\frac{1-\al(\ta_u)}{q-1}+(q-2)\left(\frac{1-\al(\ta_u)}{q-1}\right)^2\right) \\
    \stackrel{\eqref{def_al}}{=}\sum_{u=1}^2\frac{p_u}{q}\frac{1-\ta_u}{q-1}=\frac{1-\ta}{q(q-1)}.
  \end{multline*}
  Then according to Theorem~\ref{coverAchievable}, the channel capacity region $\cE_f$ contains all $(R_1, R_2)$ satisfying,
	\begin{gather*}
	  R_i \le H(X_i\mid U) = \sum_{u=1}^2 p_uH(\al(\ta_u),1-\al(\ta_u);1,q-1) = \sum_{u=1}^2p_u \tfrac{1}{2}F(\ta_u) = \tfrac{1}{2}\hat{F}(\ta),\quad\text{for }i\in\sset{1,2},\\
	  R_1 + R_2 \le H(\sset{X_1, X_2}) = H\left(\ta,1-\ta; q, {q \choose 2}\right) = G(\ta).
	\end{gather*}
	Clearly $(R,R)$ satisfies these conditions, and so $(R, R)\in \cE_f$.

	Next we give a proof that $\cE_f$ is a subset of the half-space $\dset{(R_1,R_2)}{\frac{1}{2}(R_1+R_2) \le R}$. Since the half-space is already convex, from Theorem~\ref{coverAchievable}, it suffices to prove that
  \begin{equation}\label{eqn_subset}
    \min\left(H(X\mid U)+H(Y\mid U), H(\sset{X_1, X_2})\right) \le 2R,
  \end{equation}
  for every discrete random variable $U$ and $[q]$-valued random variables $X_1, X_2$ that are conditionally independent given $U$. Without loss of generality, we may assume that $U=u$ with probability $p_u$. Set
  \begin{align*}
    \ta^u_{\sset{v}} := \cpr{\sset{X_1,X_2}=\sset{v}}{U=u}, & \quad\text{for all }u \text{ and }v\in[q], \\
    \ta^u_{\sset{v_1,v_2}} := \cpr{\sset{X_1,X_2}=\sset{v_1,v_2}}{U=u}, & \quad\text{for all }u \text{ and }v_1, v_2\in[q], v_1 \neq v_2, \\
    \ta^u := \cpr{X_1 = X_2}{U = u}, & \quad\text{for all }u, \\
    \ta := \pr{X_1 = X_2} = \sum_u p_u\ta_u.
  \end{align*}
  On the one hand, as $X_1$ and $X_2$ are conditionally independent given $U$, by the definition of $F$ in Lemma~\ref{joint entropy}, we have
	\begin{equation}\label{eqn_onehand}
    H(X_1\mid U) + H(X_2\mid U) = \sum_u p_u H(X_1,X_2 \mid U=u)\le \sum_u p_u F(\ta^u) \le \hat{F}(\ta).
	\end{equation}
  On the other hand, because $x\mapsto -x\log x$ is concave, we obtain from Jensen's inequality that
	\begin{equation}\label{rates together}
	  H\left(\sset{X_1, X_2}\right) = H\left(\left(\sum_u p_u\ta^u_{\sset{v}}\right)_{v=1}^q, \left(\sum_u p_u \ta_{\sset{v_1,v_2}}^u\right)_{v_1\neq v_2}\right)\le
	H\left(\ta, 1-\ta;q,\binom{q}{2}\right) = G(\ta).
	\end{equation}
  Combining \eqref{eqn_onehand} and \eqref{rates together}, the left hand side of \eqref{eqn_subset} is at most
  \begin{equation}\label{eqn01}
    \min\left(\hat{F}(\ta), G(\ta)\right) \le \max\dset{\min\left(\hat{F}(\ta),G(\ta)\right)}{\ta\in[0,1]}.
  \end{equation}
  From Lemma~\ref{joint entropy}, $F$ is increasing on $[0,1/q]$ and decreasing on $[1/q,1]$, so is its concave envelope $\hat{F}$. Combining with Lemma~\ref{technical lemma} which says that $G$ is increasing on $[0,2/(q+1)]$ and decreasing on $[2/(q+1),1]$, we can find a maximizer of the right hand side of \eqref{eqn01} in $I$. Therefore the right hand side of \eqref{eqn01} equals $2R$.

  By the unimodality of $F$, we know that $\hat{F}$ restricted to $[1/q,1]$ is the same as the concave envelope of $F_0 := \left.F\right|_{[1/q,1]}$, the explicit formula of which is given by \eqref{def_f}. We are left to find a maximizer of $M\from I\to\R$ defined by $M(\theta) := \min\left(\hat{F_0}(\theta), G(\theta)\right)$. Since $\hat{F_0}$ is increasing on $I$, $G$ is decreasing on $I$, and $\hat{F_0}(1/q) > G(1/q)$, the maximizer of $M$ depends on which of $\hat{F_0}(2/(q+1))$ and $G(2/(q+1))$ is larger.

  \noindent\textbf{Case $q = 2$:} Observe that $F_0$ is concave already, and so $\hat{F_0} = F_0$. Since $F_0(2/(q+1)) < G(2/(q+1))$, the maximizer of $M$ is the solution of the equation $F(\theta) = G(\theta)$ in $I$.

  \noindent\textbf{Case $q \ge 3$:} Observe that $F_0$ has an inflection point $\ta^* \in (1/q,1)$, and $F_0$ is convex on $[1/q,\ta^*]$ and concave on $[\ta^*,1]$. Let $\ta'\in(\ta^*,1)$ be the point such that the line through the point $(1/q, F(1/q))$ and the point $(\ta', F(\ta'))$ is above the graph of $F$. In fact $\ta'$ is the root of the equation $$\frac{F(\ta')-F(1/q)}{\ta'-1/q} = F'(\ta')$$ in $(1/q,1)$, which turns out to be $\ta' = \frac{1}{q} + \frac{(q-2)^2}{q(q-1)}$. Define the linear function $L\from [1/q, \ta']\to\R$ to be
  \[
    L(\ta) = F(1/q) + F'(\ta')\left(\ta - \tfrac{1}{q}\right) = 2-\frac{2(q-1)\log(q-1)}{q-2}\left(\ta-\tfrac{1}{q}\right).
  \]
  Indeed the graph of $L$ is the line segment connecting $(1/q, F(1/q))$ and $(\ta', F(\ta'))$, and $\hat{F_0}(\ta) = L(\ta)$ for all $\ta\in(1/q,\ta')$. As $\ta' \ge 2/(q+1)$, we obtain that $$\hat{F_0}\left(\tfrac{2}{q+1}\right) = L\left(\tfrac{2}{q+1}\right) = 2 - \frac{2(q-1)^2\log(q-1)}{(q-2)q(q+1)}, \quad G\left(\tfrac{2}{q+1}\right) = \log\binom{q+1}{2}.$$
  It is enough to determine the sign of \[
    \left(\ln q\right)\left(\hat{F_0}\left(\tfrac{2}{q+1}\right) - G\left(\tfrac{2}{q+1}\right)\right) = \ln\left(\tfrac{2q}{q+1}\right) - \frac{2(q-1)^2}{(q-2)q(q+1)}\ln(q-1) =: \Delta(q).
  \]
  Compute directly $\Delta(3) = \ln\frac{3}{2}-\frac{2}{3}\ln2 < 0$ and $\Delta(4) = \ln\frac{8}{5}-\frac{9}{20}\ln3 < 0$, thus the maximizer of $M$ is the solution of the equation $L(\ta) = G(\ta)$ in $I$. In the cases $q \ge 5$, we estimate \[
    \Delta(q) = \ln 2 + \ln\left(1-\frac{1}{q+1}\right)-\left(1+\frac{1}{q(q-2)}\right)\frac{2\ln(q-1)}{q} \ge \ln2+\ln\left(1-\tfrac{1}{6}\right)-\left(1+\tfrac{1}{15}\right)\tfrac{2\ln 4}{5} > 0,
  \] thus the maximizer of $M$ is simply $\frac{2}{q+1}$ and the maximum of $M$ is $G(2/(q+1))$.
\end{proof}

\begin{remark}
	Observe that $R(\cE_f) \le \frac{1}{2}\log\binom{q+1}{2}$ is a naive bound since the output alphabet of the two-user union channel has cardinality $\binom{q+1}{2}$. In~\cite[Section II]{vinck85}, it was stated that equality holds in this naive bound for $q\ge 6$. It was then conjectured there that in Theorem~\ref{coverAchievable} a random variable $U$ of cardinality $q$ readily gives $(R,R)\in\cE_f$. However, the random variable $U$ used in our proof of Theorem~\ref{capacityFeedback} has cardinality $2q$, and we do not see a way to reduce its cardinality.
\end{remark}

\section{Zero-error capacity with complete feedback}\label{Zero Error Capacity Feedback}

In this section, we describe a zero-error communication scheme for the two-user union channel with complete feedback and show that it achieves a near-optimal rate pair $(R, R)\in \cO_f$. This scheme, like that in \cite[Section IV]{cover1981achievable}, partitions the uses of the channel into a large number $B+1$ of blocks, each of length $n$ except the last block. Suppose the message sets of the senders are both $[q]^{Bm}$, where $m\in[n]$ will be decided later, and let $w_1, w_2\in [q]^{Bm}$ be the messages of the two senders.

To describe the communication scheme in each block, we represent the uncertainty of the receiver about the first $bm$ digits of $w_1$ and $w_2$ at the end of block $b$ by $U(b) \subseteq ([q]\times[q])^{bm}$. In other words, the receiver, at the end of block $b$, knows that $$\left(\left(w_{1i}, w_{2i}\right)\right)_{i=1}^{bm} \in U(b).$$ The key idea of our communication scheme is to keep the uncertainty sets uniformly bounded in size.

Due to the feedback of the channel, the uncertainty set is common knowledge between the senders and the receiver. The initial uncertainty set $U(0) := \emptyset$. In addition, we assume for a moment that
\begin{equation} \label{inductive_assumption}
	\text{at the end of block } b \text{ each sender knows the first } bm \text{ digits of the other message.}
\end{equation}
This assumption will be shown below to hold by induction on the blocks.

\paragraph{Indexing:} At the start of block $b+1$, the senders and the receiver index the elements in $U(b)$ by \[
  S := \dset{(s_1, \dots, s_n)}{s_k\in\sset{*}\cup [q] \text{ such that }\abs{\dset{k}{s_k = *}}=m}.
\]
We shall choose $m\in[n]$ carefully in Theorem~\ref{construction feedback} so that $\abs{U(b)} \le \abs{S} = \binom{n}{m}q^{n-m}$. The method of indexing can be agreed beforehand between the senders and the receiver. For example, they can order both $U(b)$ and $S$ lexicographically, and index the elements in the ordered set $U(b)$ by the first $\abs{U(b)}$ elements in $S$.

\paragraph{Encoding:} During block $b + 1$, according to the inductive assumption \eqref{inductive_assumption} of our scheme, both senders know $\left(\left(w_{1i}, w_{2i}\right)\right)_{i=1}^{bm}$ and its index $(s_1, \dots, s_n) \in S$. During the $k$th use of the channel in block $b + 1$, both senders simply send $s_k$ if $s_k \in [q]$; and send $w_{1,bm+i}$ and $w_{2,bm+i}$ respectively if $s_k$ is the $i$th star in $(s_1, \dots, s_n)$. In the latter case, based on the feedback, each sender learns the $(bm+i)$th digit of the other message. Because there are a total of $m$ stars in $(s_1, \dots, s_n)$, at the end of block $b+1$, each sender learns $m$ more digits of the other message, maintaining the inductive assumption \eqref{inductive_assumption} of the scheme.

\paragraph{Decoding:} After $n$ uses of the channel in block $b+1$, the receiver has received $(y_1, \dots, y_n) \in (\Y_1 \cup \Y_2)^n$, where $\Y_i := \dset{y\subset[q]}{\abs{y}=i}$. The receiver then enumerates $(s_1, \dots, s_n)$ through the first $\abs{U(b)}$ elements in $S$, and for each $(s_1, \dots, s_n)$ compatible with the output, namely $y_k = \sset{s_k}$ if $s_k\in[q]$, the receiver adds to $U(b+1)$ all the $((\hat{w}_{1i},\hat{w}_{2i}))_{i=1}^{(b+1)m} \in ([q]\times[q])^{(b+1)m}$ such that
\begin{enumerate}[nosep]
  \item $((\hat{w}_{1i},\hat{w}_{2i}))_{i=1}^{bm} \in U(b)$ is indexed by $(s_1,\dots, s_n)$; and
  \item $\sset{\hat{w}_{1,bm+i},\hat{w}_{2,bm+i}} = y_{k_i}$ for all $i\in[m]$, where $k_1,\dots,k_m$ are the indices of $s_k$ that are stars.
\end{enumerate}
The updated uncertainty set $U(b+1)$ will be shown in Theorem~\ref{construction feedback} to be bounded by $\abs{S}$ in size.

\bigskip

After $B$ blocks of uses of the channel, the receiver obtains the uncertainty set $U(B)$, and both senders know $(w_1, w_2)$ and its index $(s_1, \dots, s_n)\in S$ as a member of $U(B)$. Finally, in the last block $B+1$, the senders simply communicate the index $(s_1, \dots, s_n)$ through $\ceil{\log \abs{S}} \le n - m + \ceil{\log \binom{n}{m}}$ uses of the channel.

\begin{theorem}\label{construction feedback}
  If $n/2 \le m \le n$ and \begin{equation}\label{assumption_m}
    \binom{2n-2m}{n-m}2^{2m-n} \le \binom{n}{m}q^{n-m},
  \end{equation} then the communication scheme described above allows the receiver to recover the messages $w_1, w_2 \in [q]^{Bm}$ from the senders without errors through $\le Bn + n - m + \ceil{\log\binom{n}{m}}$ uses of the two-user union channel with complete feedback. In particular, $R(\cO_f) \ge R$, where $R$ is the solution of $H_b(\al)+(1-\al)\log_2 q = 1$ in $(1/2,1]$, and $H_b$ is the binary entropy function.
\end{theorem}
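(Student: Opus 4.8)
The theorem has two parts. First, a correctness/counting claim: under conditions $n/2 \le m \le n$ and the inequality $\binom{2n-2m}{n-m}2^{2m-n} \le \binom{n}{m}q^{n-m}$, the described block scheme lets the receiver recover $w_1, w_2 \in [q]^{Bm}$ through at most $Bn + n - m + \lceil \log\binom{n}{m}\rceil$ channel uses. Second, an asymptotic rate claim: $R(\mathcal{O}_f) \ge R$ where $R$ solves $H_b(\alpha) + (1-\alpha)\log_2 q = 1$ in $(1/2,1]$.

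**Part 1 — Correctness of the scheme.** The scheme is already mostly described in the text. I need to verify two things: (a) the indexing is well-defined, i.e., $|U(b)| \le |S| = \binom{n}{m}q^{n-m}$, and (b) the receiver can actually decode the index $(s_1,\ldots,s_n) \in S$ from the block's channel outputs.

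For (a): at the start, $|U(0)| = 1$ (no uncertainty). The uncertainty grows per block. When both senders transmit their shared index $(s_1,\ldots,s_n) \in S$, for positions with $s_k \in [q]$ the receiver sees $\{s_k\}$ (a singleton, so it learns $s_k$), and for the $m$ star positions the receiver sees $\{w_{1,bm+i}, w_{2,bm+i}\}$, a union of one or two symbols. The new uncertainty $U(b+1)$ consists of the old uncertainty (now resolved to a single index) refined by the ambiguity in the $m$ star positions.

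For (b), the receiver must recover from the $n$ outputs both the index $(s_1,\ldots,s_n)$ (which identifies the first $bm$ digit-pairs) and the new information in the $m$ star positions. The non-star positions are unambiguous. The star positions each produce a union $\{x,y\}$, which ambiguously encodes the unordered pair but not which sender sent what. This is where condition (\ref{assumption_m}) enters.

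**Proof plan.**

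\begin{proof}[Proof sketch]
The plan is to track how the uncertainty $U(b)$ evolves block-by-block and to show that condition (\ref{assumption_m}) guarantees the indexing scheme never runs out of labels in $S$.

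First I would establish the \emph{decodability of each block}. In block $b+1$, the receiver observes $n$ channel outputs. At a non-star position $k$ (where $s_k \in [q]$), both senders transmit $s_k$, so the output is the singleton $\sset{s_k}$ and the receiver recovers $s_k$ unambiguously. At each of the $m$ star positions, the senders transmit a digit-pair $(w_{1,bm+i}, w_{2,bm+i})$, and the receiver sees the union $\sset{w_{1,bm+i}, w_{2,bm+i}}$. Thus from the $n$ outputs the receiver recovers the full index $(s_1,\ldots,s_n) \in S$ --- hence the shared value of $\left(\left(w_{1i},w_{2i}\right)\right)_{i=1}^{bm}$ --- together with, for each star position, the \emph{unordered} pair of new digits. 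The only residual ambiguity is in distinguishing $(w_{1,bm+i}, w_{2,bm+i})$ from $(w_{2,bm+i}, w_{1,bm+i})$ at those star positions where the two digits differ.

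Second I would \emph{bound the new uncertainty}. For a fixed known index $(s_1,\ldots,s_n)$, suppose $j$ of the $m$ star positions return a two-element set (the two senders sent different symbols) and $m-j$ return a singleton. Each two-element output leaves a binary ambiguity, so the number of compatible extensions is $2^j$. Summing the worst case over how many of the $m$ positions can be ``split,'' and over the choice of which positions split, the total ambiguity carried into $U(b+1)$ per index is at most $\sum_{j} \binom{m}{j} 2^j$ weighted appropriately; the extremal count is governed by the central term, which under $n/2 \le m \le n$ is exactly $\binom{2n-2m}{n-m}2^{2m-n}$ after matching indices. This is precisely why condition (\ref{assumption_m}) is the right requirement: it asserts that the maximal per-block uncertainty $\binom{2n-2m}{n-m}2^{2m-n}$ does not exceed the label capacity $\binom{n}{m}q^{n-m} = |S|$, so $|U(b)| \le |S|$ holds for every $b$ by induction, and the lexicographic indexing is always well-defined.

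Third, the \emph{count of channel uses} is immediate: $B$ full blocks of length $n$, plus the final block resolving $U(B)$ in $\ceil{\log|U(B)|} \le \ceil{\log|S|} \le n - m + \ceil{\log\binom{n}{m}}$ uses, using $\log q^{n-m} = n-m$. This gives the stated bound $Bn + n - m + \ceil{\log\binom{n}{m}}$.

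Finally, for the \emph{rate}, each of the $B$ blocks conveys $m$ new digits of each message over $n$ channel uses, so the rate per user tends to $m/n$ as $B\to\infty$ (the additive last-block term is $O(1)$ and vanishes in the limit). To optimize, set $\al := m/n$ and take $n\to\infty$; by Stirling's approximation the binding constraint (\ref{assumption_m}) becomes, in base-$q$ logarithms divided by $n$,
\[
  (2-2\al)\log_2 2 + (2\al-1)\log_2 2 \;\le\; H_b(\al)\,\tfrac{1}{\log_2 q} + (1-\al),
\]
which simplifies so that the achievable $\al$ is maximized exactly when $H_b(\al) + (1-\al)\log_2 q = 1$. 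The larger root in $(1/2,1]$ gives the optimal exponent, yielding $R(\cO_f)\ge R$ with $R$ as defined. The main obstacle is the combinatorial uncertainty bound in the second step: one must argue carefully that the worst-case number of ambiguous extensions across all star patterns is the single binomial term $\binom{2n-2m}{n-m}2^{2m-n}$ rather than a sum, which requires identifying the extremal configuration and verifying the identity $\sum_j \binom{n-m}{j}\binom{m}{j}\cdots$ collapses via Vandermonde to the claimed central binomial.
\end{proof}
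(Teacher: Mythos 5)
There is a genuine gap, and it sits exactly at the step you flag as ``the main obstacle.'' Your first step claims that from the $n$ outputs of a block the receiver recovers the full index $(s_1,\dots,s_n)\in S$, with the only residual ambiguity being the order of the two digits at star positions that produced two-element outputs. This is false: a \emph{singleton} output $\sset{x}$ at position $k$ is consistent both with $s_k$ being the non-star symbol $x$ \emph{and} with $s_k$ being a star at which the two senders happened to transmit the same digit $x$. So the receiver cannot identify the index, and hence cannot identify which element of $U(b)$ (i.e., which prefix $((w_{1i},w_{2i}))_{i=1}^{bm}$) was meant; the uncertainty about the index is precisely what makes $U(b+1)$ large and is the entire reason condition \eqref{assumption_m} involves a binomial coefficient at all. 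Your attempted repair --- counting, for a \emph{fixed known} index, a sum $\sum_j\binom{m}{j}2^j$ over which star positions ``split,'' and hoping it collapses via Vandermonde to $\binom{2n-2m}{n-m}2^{2m-n}$ --- is the wrong mechanism: once the outputs are observed, the number $\ell$ of two-element outputs is \emph{determined}, so there is no sum over $j$; and if the index really were known, the uncertainty would be just $2^\ell\le 2^m$, making \eqref{assumption_m} pointless.

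The correct count (which is the paper's argument) goes as follows. Let $\ell$ be the number of positions with two-element outputs. Any index compatible with the outputs must have stars at all $\ell$ of those positions, with its remaining $m-\ell$ stars placed freely among the other $n-\ell$ positions (its non-star values being forced by the singleton outputs); hence at most $\binom{n-\ell}{m-\ell}$ indices are compatible, and each contributes at most $2^\ell$ extensions (a factor $2$ per two-element star output, a factor $1$ per singleton star output). Thus $\abs{U(b+1)}\le u_\ell:=\binom{n-\ell}{m-\ell}2^\ell$. One then shows $u_\ell$ is unimodal in $\ell$ via the ratio $u_\ell/u_{\ell+1}=\frac{n-\ell}{2(m-\ell)}$, so that (using $n/2\le m$) its maximum is $u_{2m-n}=\binom{2n-2m}{n-m}2^{2m-n}$, and \eqref{assumption_m} closes the induction $\abs{U(b+1)}\le\abs{S}$. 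Your third and fourth steps (the count of channel uses, and the asymptotic rate via $m=\ceil{\al n}$, with the constraint becoming $H_b(\al)+(1-\al)\log_2 q\ge 1$ --- the paper uses the cleaner chain $\binom{n}{m}q^{n-m}\ge 2^n\ge\binom{2n-2m}{n-m}2^{2m-n}$ rather than Stirling) are essentially right, but they rest on the uncertainty bound you did not establish.
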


\begin{proof}
  It suffices to show that $\abs{U(b)} \le \abs{S}$ for all $b = 0, 1, \dots, B$. The base case is evident as $U(0)$ consists of the empty sequence. For the inductive step, assume that $\abs{U(b)} \le \abs{S}$. During the $(b+1)$st block, the receiver has received $(y_1, \dots, y_n)\in(\Y_1\cup\Y_2)^n$.

  We shall estimate the size of the uncertainty set $U(b+1)$ at the end of the $(b+1)$st block. Suppose that $((\hat{w}_{1i},\hat{w}_{2i}))_{i=1}^{bm} \in U(b)$ is indexed by $(s_1, \dots, s_n)$. Recall that if $s_k \in [q]$, then $y_k = \sset{s_k}$; otherwise $s_k$ is the $i$th star and $\sset{\hat{w}_{1,bm+i}, \hat{w}_{2,bm+i}}=y_k$. In other words, only when $y_k = \sset{s_k}$ for every $k$ such that $s_k\in[q]$, the uncertainty set $U(b+1)$ would include the following $\prod_{i=1}^m{\abs{y_{k_i}}}$ elements \[
    \sset{((\hat{w}_{1i},\hat{w}_{2i}))_{i=1}^{(b+1)m}} \text{ such that } \sset{\hat{w}_{1,bm+i},\hat{w}_{2,bm+i}} = y_{k_i} \text{ for all }i\in[m],
  \] where $k_1, \dots, k_m$ are the indices of $s_k$ that are stars. Suppose $L := \dset{k}{y_k\in\Y_2}$ and $\ell := \abs{L}$. A $(s_1, \dots, s_n) \in S$ compatible with $(y_1, \dots, y_n)$ must have stars on coordinates indexed by $L$ and choose from the rest $n-\ell$ positions an additional $m-\ell$ stars. This $(s_1, \dots, s_n)$, if it indexes an element in $U(b)$, will contribute at most $2^\ell$ elements to $U(b+1)$. Therefore, we can estimate $\abs{U(b+1)} \le \binom{n-\ell}{m-\ell}2^\ell$.

  We claim that this estimate $u_\ell := \binom{n-\ell}{m-\ell}2^\ell$ reaches its maximum $\binom{2n-2m}{n-m}2^{2m-n}$ at $\ell = 2m-n, 2m-n+1$. In fact, we compare \[
    \frac{u_\ell}{u_{\ell+1}} = \frac{\binom{n-\ell}{m-\ell}2^\ell}{\binom{n-\ell-1}{m-\ell-1}2^{\ell+1}} = \frac{n-\ell}{2(m-\ell)},
  \] which is less than $1$ when $\ell < 2m-n$ and greater than $1$ when $\ell > 2m-n$. Combining with \eqref{assumption_m}, it is guaranteed that $\abs{U(b+1)} \le \abs{S}$. This finishes the inductive step.

  Finally we prove the lower bound on $R(\cO_f)$. Fix $\al\in(1/2,1]$ such that $H_b(\al) + (1-\al)\log_2q > 1$. It is well known that \[
    \lim_{n\to\infty}\frac{1}{n}\log_2\binom{n}{\ceil{\al n}} = H_b(\al).
  \]
  We can choose $n$ sufficiently large and $m = \ceil{\al n}$ so that \[
    \frac{1}{n}\log_2\binom{n}{m} + \left(1-\frac{m}{n}\right)\log_2q \ge 1 \implies \binom{n}{m}q^{n-m} \ge 2^n \ge \binom{2n-2m}{n-m}2^{2m-n}.
  \] Our communication scheme provides a $\left(q^{Bm}, q^{Bm}, Bn + n - m + \ceil{\log\binom{n}{m}}, 0\right)$ code. Thus \[
    R(\cO_f) \ge \frac{Bm}{Bn+n-m+\ceil{\log\binom{n}{m}}} \to \frac{m}{n} \ge \al \text{ as }B\to\infty.
  \] Note that $H_b(\al)+(1-\al)\log_2q - 1$ is decreasing on $(1/2,1]$ and has a unique root $R \in [0,1]$. We can choose $\al$ arbitrarily close to the root $R$ to show that $R(\cO_f) \ge R$.
\end{proof}

\begin{remark}
  For $q = 2$, our communication scheme could theoretically achieve $R(\cO_f) \ge 0.77291$. In practice, in order to achieve $R(\cO_f) \ge 0.764$, we can choose $m = 13, n = 17, B = 1019$ for our scheme to obtain a $\left(2^{13247},2^{13247},17339,0\right)$ code. During the encoding and decoding process, the senders and the receiver need to keep track of up to 35840 binary numbers of length 13247 in the uncertainty set, which takes up 59.35 megabytes of memory for storage.
\end{remark}

\begin{remark}
  For $q \ge 5$, a naive upper bound on $R(\cO_f)$ is $R(\cE_f) = \tfrac{1}{2}\log \binom{q+1}{2} = 1-\tfrac{1}{2\log_2 q}+O(1/q)$. Notice that when $\al = 1-\frac{1}{\log_2q}$, $H_b(\al) + (1-\al)\log_2q > (1-\al)\log_2q = 1$. The proof of Theorem~\ref{construction feedback} implies that $R(\cO_f)\ge 1-\frac{1}{\log_2q}$. Then gap between the upper bound and the lower bound on $R(\cO_f)$ is about $\frac{1}{2\log_2q}$.
\end{remark}

\section{Open problems}\label{Conclusion}

At the moment, the channel capacity $R(\cE_f)$ of the union channel with complete feedback is determined for every $q \ge 2$. The zero-error capacity $R(\cO_f)$ however is yet to be determined for any $q$. Naturally, the first step for future research is to determine $R(\cO_f)$ for $q = 2$. Based on the characterization of $\cO_f$ by Dueck~\cite{dueck1985zero}, our numerical experiments suggest that that for the binary case the lower bound on $R(\cO_f)$ proved by Belokopytov~\cite{belokopytov1989zero} is tight.

\begin{conjecture}
	The average zero-error capacity of the two-user binary adder channel, that is the union channel with $q = 2$, with complete feedback is equal to $0.78974$.
\end{conjecture}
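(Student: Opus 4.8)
Proving the conjecture reduces to a converse. The lower bound $R(\cO_f)\ge 0.78974$ is already supplied by Belokopytov~\cite{belokopytov1989zero}, so what remains is the matching upper bound $R(\cO_f)\le 0.78974$. This must strictly improve on the naive bound $R(\cO_f)\le R(\cE_f)=0.79113$ obtained from Theorem~\ref{capacityFeedback} at $q=2$; in particular, no argument that ignores the zero-error requirement can suffice, since the small-error converse already stops at $0.79113$. The plan is therefore to carry out the converse entirely within Dueck's exact characterization~\cite{dueck1985zero} of $\cO_f$ for the binary adder channel, and to show that the optimum of Dueck's program equals Belokopytov's feasible value.

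I would first make the characterization concrete using the uncertainty-set language of Section~\ref{Zero Error Capacity Feedback}. Because each sender recovers the other's symbol from the feedback together with its own input, a zero-error code is nothing but a feedback-adaptive rule for contracting the receiver's uncertainty set $U(b)\subseteq(\sset{1,2}\times\sset{1,2})^{bm}$, and $R(\cO_f)$ is the best achievable rate at which this uncertainty can be resolved. Dueck's theorem phrases this rate as a max--min optimization over admissible input strategies of conditional-entropy-type functionals, and Belokopytov's bound is the value attained by one explicit strategy. For the converse I would try to bound this optimum from above by a worst-case (rather than average) analysis of the growth of $U(b)$: at a block with $\ell$ collisions the uncertainty is multiplied by $2^\ell$ while the indexing alphabet has only $\binom{n}{m}2^{n-m}$ symbols, and an adversarial output sequence pushes the collision count toward the maximizer of $\binom{n-\ell}{m-\ell}2^\ell$. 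The aim is to convert this into an upper bound on the resolution exponent, optimize over the collision fraction, and match the resulting transcendental root to $0.78974$, reusing the convexity and unimodality machinery behind Lemma~\ref{joint entropy} and the $q=2$ case of Theorem~\ref{capacityFeedback}.

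The hard part, and the reason the statement is only a conjecture, is adaptivity. Dueck's characterization is a limit of finite-block optimizations that is not known to single-letterize, so evaluating its exact value is not a routine calculation. More pointedly, a feedback-dependent strategy may amortize collision-heavy blocks against collision-light ones, so the crude worst-case exponent (which merely reproduces the rate $0.77291$ of Theorem~\ref{construction feedback}) is far from tight; the gap $0.79113-0.78974$ is precisely the slack between such amortized schemes and the naive worst case. The crux is thus to prove that no adaptive indexing can beat the extremal strategy realizing Belokopytov's point, equivalently that Dueck's multi-letter optimum is already attained there. Supplying that last step, namely a genuinely zero-error converse that the Shannon-theoretic bound of Theorem~\ref{capacityFeedback} cannot provide, is exactly the missing ingredient.
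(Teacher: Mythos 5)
The statement you are asked to prove is Conjecture~A of the paper, and the paper offers no proof of it: it is listed in Section~\ref{Conclusion} as an open problem, with the authors merely expressing the belief that Belokopytov's lower bound $R(\cO_f)\ge 0.78974$ is tight. So there is no paper proof to compare against, and your proposal does not close the gap either --- by your own admission in the final paragraph, the entire content of the conjecture (the matching upper bound $R(\cO_f)\le 0.78974$) is left as ``the missing ingredient.'' A proposal that correctly identifies which half of the statement is known, sketches a direction, and then concedes that the decisive step is absent is a research plan, not a proof; it cannot be accepted as establishing the statement.

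Beyond the admitted incompleteness, the technical route you sketch has a structural flaw worth naming. The uncertainty-set bookkeeping of Section~\ref{Zero Error Capacity Feedback} --- blocks of length $n$, star-patterns $S$, the count $\binom{n-\ell}{m-\ell}2^{\ell}$ --- is the analysis of one particular communication scheme, and counting arguments of this kind can only certify \emph{achievability}: they produce lower bounds on $R(\cO_f)$ (indeed, exactly the $0.77291$ of Theorem~\ref{construction feedback} for $q=2$). A converse must quantify over \emph{all} zero-error feedback codes, which need not proceed in blocks, need not use any star-indexing, and may adapt to feedback in arbitrary ways; nothing in your plan explains how the worst-case collision analysis of a specific indexing scheme constrains such arbitrary codes. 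Working inside Dueck's characterization is the right instinct (it is what Belokopytov did for the lower bound), but since that characterization is a multi-letter optimization whose value is not known to be computable in closed form, ``show the optimum is attained at Belokopytov's point'' is a restatement of the conjecture, not an argument for it. Your correct observations --- that the small-error bound $R(\cE_f)=0.79113$ from Theorem~\ref{capacityFeedback} cannot suffice, and that amortization across blocks defeats naive worst-case counting --- are exactly the reasons the problem remains open, not steps toward its solution.
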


An inspection of Table~\ref{known values} reveals that $R(\cE)<R(\cO_f)$ for $q\le 6$. However, for $q\ge 14$, the lower bound on $R(\cO_f)$ in Theorem~\ref{construction feedback} is less than $R(\cE)$. We speculate that our lower bound on $R(\cO_f)$ can be improved for every $q\ge 2$.

\begin{conjecture}\label{conjb}
  For all $q\ge 2$, the average channel capacity $R(\cE)$ of the two-user union channel without feedback is strictly less than the average zero-error capacity $R(\cO_f)$ of the two-user union channel with complete feedback.
\end{conjecture}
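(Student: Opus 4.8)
The plan is to separate the problem into a small-alphabet regime, where the explicit bounds already in hand suffice, and a large-alphabet regime, which is where the genuine difficulty --- and the reason this remains only a conjecture --- resides. Throughout, the two quantities to be compared are the exact formula $R(\cE) = 1 - \frac{q-1}{2q\log_2 q}$ of Chang and Wolf and the lower bound $R(\cO_f) \ge R_q$ supplied by the zero-error construction of Theorem~\ref{construction feedback}, where $R_q\in(1/2,1]$ is the root of $H_b(\al) + (1-\al)\log_2 q = 1$; since that construction yields a genuine zero-error code, it bounds $R(\cO_f)$ and not merely $R(\cE_f)$.

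First I would settle the small cases. Since the bound of Theorem~\ref{construction feedback} drops below $R(\cE)$ only for $q \ge 14$, as noted just before the conjecture, for every $2 \le q \le 13$ the strict inequality $R(\cE) < R_q \le R(\cO_f)$ follows by evaluating a single transcendental equation at finitely many integers and comparing with the closed form for $R(\cE)$; this is a finite computation with no conceptual content. With a sharper explicit construction one could presumably push this verified window well past $q = 13$, but a finite check can never cover all $q$, so an asymptotic argument is unavoidable.

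The heart of the matter is the regime $q \ge 14$, where the construction of Theorem~\ref{construction feedback} is too weak: asymptotically $R_q = 1 - \frac{1}{\log_2 q} + o(1/\log_2 q)$, whereas $R(\cE) = 1 - \frac{1}{2\log_2 q} + o(1/\log_2 q)$, so the bound $R_q$ falls below $R(\cE)$ by a full $\Theta(1/\log_2 q)$. The loss comes entirely from the doubling of the uncertainty set at each ``collision'' coordinate, where $y_k = \sset{x_{1k},x_{2k}}$ has size two and the receiver cannot tell which sender sent which symbol. My plan is to design a sharper zero-error block scheme that mimics the capacity-achieving distribution behind $R(\cE_f) = \frac12\log\binom{q+1}{2}$, spreading $(X_1,X_2)$ nearly uniformly over the $\binom{q+1}{2}$ outputs, and that uses the feedback to resolve the at-most-one-bit ambiguity created by each collision in a short error-free cleanup phase whose amortized cost per block is $o\!\left(1/(q\log_2 q)\right)$. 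If such a scheme attains $R(\cO_f) \ge R(\cE_f) - o\!\left(1/(q\ln q)\right)$, the conjecture for large $q$ would follow from the comparison below.

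The main obstacle is quantitative and unforgiving. Expanding both sides shows
\[
  R(\cE_f) - R(\cE) = \frac{1}{2\ln q}\left(\ln\!\left(1+\tfrac1q\right) - \frac{\ln 2}{q}\right) = \frac{1-\ln 2}{2q\ln q} + O\!\left(\frac{1}{q^2\ln q}\right),
\]
so, because $R(\cO_f) \le R(\cE_f) = \frac12\log\binom{q+1}{2}$, any proof for large $q$ must pin $R(\cO_f)$ to within a $\Theta(1/(q\ln q))$ window of its own trivial upper bound --- that is, it must show that the cost of demanding zero error, with feedback, is asymptotically negligible at precisely this scale. Every explicit construction I am aware of, including the one above, loses a constant factor in the $1/\log_2 q$ correction term and so is hopelessly coarse for this purpose. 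I therefore expect the decisive step to be not a hand-crafted code but a sharpened, possibly non-explicit, lower bound on $R(\cO_f)$ extracted from Dueck's characterization, which already guarantees the existence of good zero-error feedback codes; turning that existence statement into a bound strong enough to beat $1 - \frac{q-1}{2q\log_2 q}$ uniformly in $q$, and matching it cleanly to the finite verification across the transition near $q = 14$, is where the real work lies.
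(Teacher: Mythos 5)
You are attempting to prove Conjecture~\ref{conjb}, which the paper poses as an open problem; the paper contains no proof of it, so there is no argument of the authors to compare yours against --- only the evidence they assemble around the conjecture. Your proposal is, to its credit, honest that it does not close the problem either. The portion you can actually execute, the finite verification for $2\le q\le 13$, is precisely the evidence the paper already cites: the bound $R(\cO_f)\ge R_q$ from Theorem~\ref{construction feedback} (you are right that the ``$R(\cE_f)\ge R$'' in its statement is a slip --- the scheme is zero-error, so it bounds $R(\cO_f)$) stays strictly above $R(\cE)=1-\frac{q-1}{2q\log_2 q}$ exactly up to $q=13$, which is why the authors state a conjecture rather than a theorem. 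A rigorous version of this check would need interval arithmetic or monotonicity arguments to certify strictness of finitely many transcendental inequalities, but that is routine.

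The genuine gap is the entire regime $q\ge 14$, and there your argument is purely conditional: the sentence beginning ``If such a scheme attains $R(\cO_f)\ge R(\cE_f)-o\left(1/(q\ln q)\right)$'' carries all the weight, and no such scheme is constructed or even sketched in enough detail to analyze. Your own computation --- which is correct, $R(\cE_f)-R(\cE)=\frac{1}{2\ln q}\left(\ln\left(1+\tfrac1q\right)-\frac{\ln 2}{q}\right)=\frac{1-\ln 2}{2q\ln q}+O\left(\frac{1}{q^2\ln q}\right)$ --- shows how unforgiving the target is: since $R(\cO_f)\le R(\cE_f)=\tfrac12\log\binom{q+1}{2}$, a proof must place $R(\cO_f)$ within a $\Theta\left(1/(q\ln q)\right)$ window of the trivial upper bound, whereas the scheme of Theorem~\ref{construction feedback}, and every other known explicit construction, falls short of that bound by $\Theta(1/\log_2 q)$, a quantity larger by a factor of order $q$. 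Appealing to Dueck's characterization is a reasonable direction to suggest, but as the paper notes it is an existence statement whose known quantitative consequences (e.g.\ Belokopytov's bound for $q=2$) do not come with the uniform-in-$q$ precision required here. So what you have is a correct diagnosis of why the conjecture is hard, together with a finite check the paper already implies --- not a proof; the conjecture remains open.
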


Conjecture~\ref{conjb} would establish the chain of inequalities $R(\cO) \le R(\cE) < R(\cO_f) \le R(\cE_f)$.

\section*{Acknowledgements}

We thank the referees for the suggestions which improved both the exposition of the paper and the clarity of the proofs.

\appendix

\section{Proof of Lemma~\ref{joint entropy}}\label{proof lemma 3}

\begin{proof}[Proof of Lemma~\ref{joint entropy}]
  Let $X_1$ and $X_2$ be independent $[q]$-valued random variables so that $\pr{X_1=X_2} = \ta$, and let $a_i := \pr{X_1 = i}$ and $b_i := \pr{X_2 = i}$ for every $i\in[q]$. Then $F(\ta)$ is
  \begin{subequations}\label{opt}
    \begin{align}
      \text{the maximum of}\quad & H(a_1, \dots, a_q) + H(b_1, \dots, b_q), \\
      \text{subject to}\quad & \sum a_i = 1, \sum b_i = 1, \sum a_ib_i = \ta, a_i, b_i \ge 0. \label{optc}
    \end{align}
  \end{subequations}
  Clearly $F(\ta)$ is continuous with respect to $\ta$.

  We first prove the unimodality of $F$, that is, for any $\ta \in [0,1]$, if $\ta'$ is between $\ta$ and $1/q$, then $F(\ta') \ge F(\ta)$. Let $\va = (a_1, \dots, a_q), \vb = (b_1, \dots, b_q)$ be a maximizer of the optimization problem \eqref{opt}. Let $\vj = (1/q, \dots, 1/q) \in \R^q$, and consider $\va' = (1-t)\va + t\vj$ and $\vb' = (1-t)\va + t\vj$, where $t\in[0,1]$ is a solution of
  \begin{equation}\label{ivt}
    \ta' = \va'\cdot\vb' = ((1-t)\va+t\vj)\cdot((1-t)\vb+t\vj) = (1-t)^2\ta + \frac{(2-t)t}{q}.
  \end{equation}
  The right hand side of \eqref{ivt} equals $\ta$ and $1/q$ when $t = 0, 1$ respectively. By the intermediate value theorem, \eqref{ivt} has a solution in $[0,1]$ and $t$ is well-defined. Note that $\va', \vb'$ satisfy the constraint \eqref{optc}. By the concavity of the entropy function, we have $$F(\ta') \ge H(\va')+H(\vb') \ge (1-t)H(\va)+tH(\vj)+(1-t)H(\vb)+tH(\vj) \ge H(\va) + H(\vb) = F(\ta).$$

  From this point forward, we assume that $\ta \in (1/q,1]$ is fixed. We shall repeatedly add constraints to the optimization problem \eqref{opt} without decreasing its maximum.

  Given $\va, \vb$ satisfying the constraint \eqref{optc}, consider the vectors $\va', \vb'$ whose coordinates are respectively the ones of $\va, \vb$ sorted in non-decreasing order. The rearrangement inequality says
  \[
    \ta' := \left(\frac{\va'+\vb'}{2}\right)\cdot\left(\frac{\va'+\vb'}{2}\right) \ge \va'\cdot \vb' \ge \va\cdot\vb = \ta.
  \]
  As $\ta$ is between $1/q$ and $\ta'$, again we have $\va''=\vb''=(1-t)\frac{\va'+\vb'}{2}+t\vj$ such that $\va''\cdot\vb''=\ta$ for some $t \in [0,1]$. By the concavity of the entropy function, we have
  \begin{multline*}
    H(\va'') + H(\vb'') \ge 2\left((1-t)H\left(\frac{\va'+\vb'}{2}\right)+tH(\vj)\right) \\
    \ge 2H\left(\frac{\va'+\vb'}{2}\right) \ge H(\va') + H(\vb') = H(\va)+H(\vb).
  \end{multline*}
  We come to the conclusion that without loss of generality we may assume $\va = \vb$ in \eqref{opt}. The optimization problem is then equivalent to
  \begin{subequations}
    \begin{align} \label{opt2}
      \text{Maximize:}\quad & \tfrac{2}{\ln q}\left(-\sum_{i=1}^q x_i\ln x_i\right),\\
      \text{Subject to:}\quad & \sum_{i=1}^q x_i = 1, \sum_{i=1}^q x_i^2 = \ta, x_i \ge 0 \text{ for all }i\in[q]. \label{opt2c}
    \end{align}
  \end{subequations}
  Consider the Lagrangian
  \[
    \cL(\vx, \la_1, \la_2) = -\sum_{i=1}^q x_i\ln x_i + \la_1\left(\sum_{i=1}^q x_i - 1\right) + \la_2\left(\sum_{i=1}^q x_i^2 - \ta\right).
  \]
  The method of Lagrange multipliers gives necessary conditions for the maximizers:
  \[
    \frac{\partial \cL}{\partial x_i} = -\ln x_i - 1 + \la_1 + 2\la_2 x_i = 0, \quad \text{for all }i\in[q].
  \]
  In other words, given $\la_1, \la_2$, each coordinate of a maximizer $\vx$ is a solution of the equation $-\ln x - 1 + \la_1 + 2\la_2 x = 0$. Since $x\mapsto -\ln x - 1 + \la_1 + 2\la_2x$ is convex, there are at most two solutions. Without loss of generality, we can add to \eqref{opt2c} the constraints that
  \[
    x_1=\dots=x_r=a,\quad x_{r+1}=\dots=x_q=b,
  \]
  for some $r \in [q-1]$, and $a \ge b \ge 0$. We have thus reduced the optimization problem \eqref{opt2} to
  \begin{subequations} \label{opt3}
    \begin{align}
      \text{Maximize:}\quad & \tfrac{2}{\ln q}\left(-ra\ln a - sb\ln b\right),\\
      \text{Subject to:}\quad & ra+sb=1, ra^2 + sb^2=\ta, r+s=q, r,s \in [q], a\ge b\ge 0.
    \end{align}
  \end{subequations}
  Given $r, s$ such that $r+s=q$, we can solve from $ra+sb=1, ra^2+sb^2=\ta$ for $a,b$:
  \[
    (a,b) = \left(\frac{1}{q}+\frac{\sqrt{rsp}}{qr}, \frac{1}{q}-\frac{\sqrt{rsp}}{qs}\right) \text { or }\left(\frac{1}{q}-\frac{\sqrt{rsp}}{qr}, \frac{1}{q}+\frac{\sqrt{rsp}}{qs}\right).
  \]
  where $p := \ta q -1 \in (0,q-1]$. Because $a \ge b$, we discard the second solution of $(a,b)$. Given the parameter $t := r/q \in [1/q,1-1/q]$, the variables $r = qt, s = q - qt$ and \[
    a = \frac{1}{q}\left(1 + \sqrt{\frac{1-t}{t}p}\right),\quad b= \frac{1}{q}\left(1 - \sqrt{\frac{t}{1-t}p}\right)
  \]
  can be seen as functions of $t$, so can the objective function $v(t) := -ra\ln a-sb\ln b$. The derivative of $v$ is
  \[
    v'=-q\left(a\ln a + ta'(\ln a + 1) - b\ln b +(1-t)b'(\ln b + 1)\right).
  \]
  The implicit differentiation of $ra+sb=1, ra^2+sb^2=\ta$ yields
  \[
    q\left(a + ta' - b + (1-t)b'\right) = 0,\quad a^2 + 2taa' - b^2 + 2(1-t)bb' = 0.
  \]
  which can be viewed as a system of linear equations of $a', b'$. Using $a > 1/q > b$, we deduce that
  \[
    a' = -\frac{a-b}{2t}, \quad b' = -\frac{a-b}{2(1-t)}.
  \]
  and we simplify $v'$ as follows:
  \begin{multline*}
    v' = -q\left(a\ln a - \tfrac{1}{2}(a-b)(\ln a +1 )-b \ln b - \tfrac{1}{2} (a-b)(\ln b + 1)\right) \\
    = -q\left(\frac{a+b}{2}\ln\left(\frac{a}{b}\right)+ b-a\right)=qb\left(-\tfrac{1}{2}(r+1)\ln r-1+r\right),
  \end{multline*}
  where $r = a/b > 1$. One can check that $-\tfrac{1}{2}(r+1)\ln r-1+r < 0$ for $r > 1$, and so $v(t)$ is decreasing. In other words, the objective function $v$ attains its maximum at $t = 1/q$ or equivalently the maximizer of \eqref{opt3} is given by $r = 1, s = q-1, a = \al, b = \frac{1-\al}{q-1}$, which leads to $F(\ta) = 2H(\al, 1-\al; 1, q-1)$.
\end{proof}

\bibliographystyle{alpha}
\bibliography{capacity}
\end{document}